\newtheorem{theorem}{Theorem}
\newtheorem*{theorem*}{Theorem}
\newtheorem{lemma}[theorem]{Lemma}
\theoremstyle{definition}
\newtheorem{definition}[theorem]{Definition}
\newcommand{\argmax}{\mbox{argmax}}
\begin{document}

\title{Online Matching in a Ride-Sharing Platform}  
\author{Chinmoy Dutta \thanks{Lyft, San Francisco. Email: cdutta@lyft.com} \qquad Chris Sholley \thanks{Lyft, San Francisco. Email: chris@lyft.com}}
\date{} 

\maketitle

\begin{abstract}
We propose a formal graph-theoretic model for studying the problem of matching rides online in a ride-sharing platform. Unlike most of the literature on online matching, our model, that we call {\em Online Windowed Non-Bipartite Matching} ($\mbox{OWNBM}$), pertains to online matching in {\em non-bipartite} graphs. We show that the edge-weighted and vertex-weighted versions of our model arise naturally in ride-sharing platforms. We provide a randomized $\frac{1}{4}$-competitive algorithm for the edge-weighted case using a beautiful result of Lehmann, Lehmann and Nisan (EC 2001) for combinatorial auctions. We also provide an $\frac{1}{2} (1 - \frac{1}{e})$-competitive algorithm for the vertex-weighted case (with some constraint relaxation) using insights from an elegant randomized primal-dual analysis technique of Devanur, Jain and Kleinberg (SODA 2013). 
\end{abstract}

\section{Introduction}
\label{sec:intro}
In recent times, the rapid advancement and widespread availability of mobile computing have significantly uplifted the face of urban mobility and transportation. A large part of this change is due to the impact created by ride-sharing platforms. Such platforms allow any rider to request a ride at any time and any place at the convenience of a tap on a ride-sharing app installed on her mobile device. On receiving a ride request, the platform fulfils it by choosing one of the nearby drivers available to transport the rider and dispatching her to pick up the rider. Besides the convenience of the ride-hailing process, these platforms can also do marketplace optimization to ensure lower estimated time to pick-up for riders as well as better time utilization for drivers. 

The major advantage of these platforms, however, is the {\em ride-sharing} possibility that they enable. With this feature, riders can choose to share the vehicle with other riders heading in the same direction. There are tremendous socio-economical and environmental advantages to this sharing. It lets the riders share the cost of the ride, thereby providing them an economical and affordable means of transportation. Riders can also take advantage of incentives like use of high occupancy lanes. Moreover, ride-sharing eases the burden on the transportation infrastructure of cities and helps reduce traffic congestion, specially during heavy commute times. Reducing traffic congestion in turn helps in cutting down commute times, thus providing great economic benefit to individuals, businesses and institutions by reducing lost time and increasing productivity. Perhaps even more importantly, ride-sharing provides great environmental benefits by reducing air pollution as a result of reduced number of cars on roads, thereby greatly reducing our carbon footprint. In fact, for some densely populated and highly congested cities, intelligent coupling of ride-sharing with smart transit systems seems the only viable and scalable transportation option.

Sharing rides with others, however, does incur some costs to the riders like increased detours and travel times as well as some loss of privacy and convenience. The task of the ride-sharing platform is to intelligently trade off these costs against the benefits mentioned above. To be able to match rides that incur minimal inconvenience to the riders while maximizing marketplace efficiency, therefore, is of paramount importance for a ride-sharing platform. Considering the central role of ride-sharing platforms for transportation and economy, and the central role of this matching problem for such platforms, we strongly believe that there is need for modeling this problem formally and studying it mathematically. To the best of our knowledge, such rigorous attempt has not been made for this problem yet.

In this paper, we identify the essential aspects of the problem of matching rides in a ride-sharing platform. This then enables us to formally model the problem in graph-theoretic online matching setting. Our modeling leads to new problems in the space of online matching that we believe are of independent theoretical interest and importance. We then provide approximation algorithms for two different variants of our proposed model that arise naturally in the ride-sharing context.

\subsection{Matching Rides for Ride-sharing}
\label{subsec:ride-share}
We start by identifying the unique attributes of the problem of matching rides for ride-sharing. We discuss both the constraints that must be satisfied and the objectives that might be optimized subject to satisfying those constraints. 

\subsubsection{Constraints}
\label{subsubsec:constraints}

\paragraph{Spatial constraint} Matching rides obviously has a spatial constraint to it. That is, rides should be matched only if there is good spatial overlap among their routes. While the exact notion and criteria of spatial overlap is for the platform to decide, this ensures that none of the riders will experience an unacceptable amount of detour or increase in their travel time. With this spatial criteria in place, the existing rides in the platform that a new ride can be matched to is known once the new ride arrives. 

\paragraph{Temporal constraint} Matching rides also has a temporal constraint to it. For example, two rides cannot be matched together even if they have perfect spatial overlap (that is, exact same pick-up and drop-off locations), if they arrive at very different times. The acceptable maximum period between the arrival of two rides such that they can still be matched together is for the ride-sharing platform to determine. We call this quantity the {\em incidence window}.

\paragraph{Online constraint} Furthermore, the ridesharing platform cannot wait too long for making its decisions and must make them in real-time as rides arrive. In other words, the problem must be solved {\em online} without full knowledge of rides to arrive in future. The decision that the platform needs to make for every arriving ride is whether to {\em match-or-dispatch}. If the platform decides to match the ride, it must pick an existing ride to match it to. Otherwise, the platform must dispatch a driver to pick-up the ride. The maximum period the platform can wait since the arrival of a ride before it must make a match-or-dispatch decision for it is called the {\em matching window}.

\subsubsection{Objectives}
\label{subsubsec:objectives}

\paragraph{Matching efficiency objective} A widely used objective to optimize for while matching rides in a ride-sharing platform is the matching efficiency. A feasible match that satisfies the constraints above provides some efficiency for the system. For example, one notion of efficiency can be the amount of savings we get by making the match as compared to transporting the riders individually. The savings may be measured in terms of distance driven, time taken, cost of trips or any combination of these. Once the platform fixes the notion of efficiency, the goal it can optimize for could be the system-wide matching efficiency which is the sum of the efficiencies derived from all the matches made.

\paragraph{Matching rate objective} Another objective that the platform might want to optimize for while matching rides is the matching rate. This is especially important in the growth stages of a ride-sharing platform where it wants to maximize the volume of shared rides subject to meeting the matching constraints. Different rides can have different value for the platform depending upon various factors like distance of the route, origin-destination locations, profitability etc. In this case, the goal the platform can optimize for is the matching rate which is the sum of the values of all the rides that were matched.

Both matching efficiency and matching rate are widely used and very closely watched metrics in the ride-sharing industry.

\subsection{Our contributions}
\label{subsec:contrib}
In this paper, we provide a formal model for the problem of matching rides for ride-sharing in a graph-theoretic online matching setting. We call our model {\em online windowed non-bipartite matching} ($\mbox{OWNBM}$). In this model, time is considered discrete and vertices arrive sequentially, one per time step. The edges in the graph represent matchability (that is, satisfying the spatial and temporal constraints). The edges originating from a vertex are revealed once the vertex arrives and each such edge can go to one of the preceding $d$ (incidence window) vertices. It helps to think of the vertices as ordered by arrival times and edges as directed and going from a vertex to a preceding one by that ordering. After $w$ (matching window) time steps since the arrival of a vertex, we must match it to one of the (yet unmatched) vertices it has an edge to or from, or leave it unmatched. A matching is a subset of edges such that every vertex has at most one edge incident on it (either to or from) that is included in the subset.

We consider two variants of our model. In the {\em edge-weighted} variant, each edge of the graph has a weight and the objective is to construct a matching to maximize the sum of the weights of the edges picked in the matching. This corresponds to the matching efficiency objective for the ride-sharing platform mentioned above. In the {\em vertex-weighted} variant of our model, each vertex of the graph has a weight and the objective is to construct a matching to maximize the sum of the weights of the vertices that get matched (has an edge of the matching incident on it, either to or from). This corresponds to the matching rate objective mentioned above.

We provide the following approximation algorithms for the edge and vertex weighted versions of windowed online non-bipartite matching.

\begin{theorem}
\label{thm:edge-weighted}
There is a randomized $\frac{1}{4}$-approximation for edge-weighted $\mbox{OWNBM}$ with $w = d$, where $w$ and $d$ are the matching and incidence windows respectively.
\end{theorem}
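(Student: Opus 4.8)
The plan is to obtain the ratio $\frac14$ as a product of two factors of $\frac12$: one lost to a randomized partition of the arrival timeline that both reduces the problem to independent offline-style subproblems and sidesteps the online/non-bipartite difficulty, and one lost to the Lehmann--Lehmann--Nisan (LLN) greedy approximation applied to each subproblem. The single structural fact I would lean on throughout is the hypothesis $w=d$: a vertex $v$ arriving at time $t$ can only have edges to or from vertices arriving in the interval $[t-d,\,t+d]$, so by the moment its matching deadline $t+w=t+d$ is reached, the \emph{entire} neighborhood of $v$ has already been revealed. This is exactly the slack that lets us treat each vertex's local decision as if we had offline knowledge of its incident edges.

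First I would cut the timeline into consecutive blocks of length $2d$, starting from a uniformly random offset $r \in \{0,1,\dots,2d-1\}$. Since any edge spans at most $d$ time steps while consecutive block boundaries are $2d$ apart, at most one boundary can separate the two endpoints of a given edge, and it does so with probability at most $d/(2d)=\frac12$. Hence each edge has both endpoints in a common block with probability at least $\frac12$, so by linearity of expectation the restriction of the optimal matching $\mathrm{OPT}$ to within-block edges has expected weight at least $\frac12\, w(\mathrm{OPT})$. Because distinct blocks are vertex-disjoint, matchings computed separately inside the blocks can never conflict, and their union is automatically a feasible matching; it therefore suffices to approximate, within each block, the maximum-weight matching using only within-block edges.

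Inside a single block I would model the within-block matching as a combinatorial auction whose items are the block's vertices and in which each within-block edge $e=(u,v)$ is a single-minded bid demanding the bundle $\{u,v\}$ at value equal to its weight $w_e$; a feasible allocation is then exactly a matching, and social welfare is the matching weight. Running the LLN greedy mechanism on this instance yields a $\frac12$-approximation to the optimal within-block matching, which combined with the previous step gives the claimed $\frac14$ in expectation. The remaining point is to argue that this greedy can be \emph{executed online} within the block subject to the deadlines: by the $w=d$ observation, whenever a vertex reaches its deadline all of its within-block incident edges are already known, so the greedy's comparisons of incident edge weights can always be carried out in time.

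The main obstacle I anticipate is reconciling the weight-ordered nature of the LLN greedy with the online deadlines. The greedy prefers to commit heavy edges first, yet a light vertex near the start of a block must be decided before heavier edges incident to later vertices of the same block are revealed, which naively could force suboptimal commitments and break the $\frac12$ guarantee. The crux of the argument will be to schedule the commitments so that a vertex is matched only at its deadline and only to its best still-available partner, and then to prove that this deadline-respecting variant loses nothing against the offline LLN bound, exploiting that each block has bounded temporal extent $2d$ and that $w=d$ guarantees full neighborhood knowledge at every deadline. I expect the bookkeeping showing that the deadline-ordered greedy and the weight-ordered LLN greedy produce matchings of comparable weight to be the most delicate part of the proof.
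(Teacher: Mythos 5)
Your random-shift block decomposition is sound as far as it goes: with blocks of length $2d$ and a uniform offset, each edge (spanning at most $d$ steps) is cut with probability at most $\frac{1}{2}$, so the within-block restriction of the optimal matching retains half its weight in expectation. The gap is the step you yourself flag as ``delicate'': the claim that a deadline-respecting greedy inside a block loses nothing against the offline LLN bound is not merely unproven, it is false, because a block of length $2d$ is \emph{not} an offline subproblem --- deadlines of early vertices in a block expire before later vertices of the same block arrive. Concretely, take $d=2$ and vertices $a,b,c$ arriving at times $1,2,4$ inside one block, with edge $(b,a)$ of weight $1$ and edge $(c,b)$ of weight $W \gg 1$. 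At $a$'s deadline (time $3$) the edge $(c,b)$ has not yet been revealed, so your rule ``match each vertex at its deadline to its best still-available partner'' commits $a$--$b$ and forfeits $W$; the within-block optimum is $W$, so the greedy's ratio inside a block is unbounded, and scaling such chains rules out any constant guarantee for a deterministic deadline-ordered greedy of this kind. The conflict you anticipated between the weight order that the single-minded greedy needs and the arrival/deadline order is exactly the non-bipartite online difficulty: $w=d$ gives you full knowledge of a vertex's own neighborhood at its deadline, but tells you nothing about the future edges of the partner you are about to consume.

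The paper spends its two factors of $\frac{1}{2}$ differently, and the difference is what makes the argument work. It formulates the auction with \emph{vertices as bidders} holding submodular (max-of-incident-edge-weight) valuations over later-arriving vertices as items; for submodular valuations the LLN greedy processes \emph{items in arrival order}, allocating each to the bidder with highest marginal value, so it is natively online and no reordering by weight is ever needed (the first $\frac{1}{2}$ is the LLN approximation loss). The price is that the resulting structure is only a \emph{semi-matching} --- a vertex may be used once as an origin and once as a destination --- which is precisely the slack that rescues the chain example above: $b$ can be tentatively assigned to both $a$ and $c$. The second $\frac{1}{2}$ is then paid converting the semi-matching into a genuine matching online, via a random two-coloring along chains (Algorithm~\ref{alg:edge-matching}) that keeps each semi-matching edge with probability exactly $\frac{1}{2}$. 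If you wanted to salvage your decomposition you would still need an online within-block subroutine, and the natural candidate is exactly this semi-matching-plus-coloring machinery, at which point the block partition becomes redundant; some relaxation of the matching constraint during the run appears essential, and your proposal has none.
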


\begin{theorem}
\label{thm:vertex-weighted}
There is a randomized  $\frac{1}{2} (1 - \frac{1}{e})$-approximation for vertex-weighted $\mbox{OWNBM}$ with $w = d$, where $w$ and $d$ are the matching and incidence windows respectively. This is under the assumption that the online algorithm is allowed to match three rides together and is allowed to delete an edge from a $3$-match.
\end{theorem}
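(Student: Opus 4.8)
The plan is to reduce vertex-weighted $\mbox{OWNBM}$ (with $w=d$) to online vertex-weighted \emph{bipartite} matching, and then run the RANKING algorithm analyzed by the randomized primal-dual method of Devanur, Jain and Kleinberg. The $\bigl(1-\frac{1}{e}\bigr)$ factor will come entirely from the bipartite RANKING guarantee, while the extra factor of $\frac{1}{2}$ will come from the reduction, in which a random assignment of each vertex to one of two roles makes each optimal edge ``usable'' with probability $\frac{1}{2}$. The two relaxations in the statement---forming a $3$-match and deleting an edge from a $3$-match---are exactly what let us push this probability from the naive $\frac{1}{4}$ up to $\frac{1}{2}$ and, simultaneously, keep every match the algorithm wants to make feasible inside the windows.

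First I would set up the reduction. Assign to each arriving vertex an independent fair coin designating it \emph{offline} or \emph{online}, and orient every edge from its later endpoint to its earlier one as in the model. An edge whose later endpoint is online and earlier endpoint is offline is directly usable in the standard sense: when the online vertex arrives, the offline vertex has already arrived and, because $w=d$, its deadline $t_u+d$ has not yet passed, so it is still available. For an edge in the opposite configuration I would invoke the relaxation: when the later (offline) vertex is finalized it is allowed to ``reach back'' and capture an earlier (online) vertex, temporarily creating a $3$-match and then deleting the stale edge, so that this edge too becomes usable. A short case analysis then shows that each edge of the optimal matching survives as a usable bipartite edge with probability $\frac{1}{2}$, hence the expected optimum of the reduced bipartite instance is at least $\frac{1}{2}\,\mathrm{OPT}$. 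The equality $w=d$ is used throughout to guarantee that, at the moment a vertex is finalized (its deadline, reached in the same order as arrivals), every potential partner is still within its own window.

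Next I would run RANKING on the reduced instance and analyze it by the randomized primal-dual technique. Draw an independent uniform seed $y_u\in[0,1]$ for each offline vertex, set $g(y)=e^{\,y-1}$, and when an online vertex $v$ arrives match it to the available offline neighbor $u$ maximizing $w_u\bigl(1-g(y_u)\bigr)$. I would write the matching LP and its dual, and on each match split the gained weight into dual charges $\beta_u=w_u\,g(y_u)$ on the offline vertex and $\alpha_v=w_u\bigl(1-g(y_u)\bigr)$ on the online vertex. Taking expectations over the seeds, the standard argument yields approximate dual feasibility edge by edge, $E[\alpha_v+\beta_u]\ge\bigl(1-\frac{1}{e}\bigr)\,w_u$, and weak duality converts this into $E[\mathrm{ALG}]\ge\bigl(1-\frac{1}{e}\bigr)$ times the reduced optimum. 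Composing with the reduction bound gives $E[\mathrm{ALG}]\ge\frac{1}{2}\bigl(1-\frac{1}{e}\bigr)\mathrm{OPT}$.

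The main obstacle I anticipate is the reduction step, not the RANKING analysis. Two things must be checked carefully and in tandem. First, that the ``reach back'' move---creating a $3$-match and deleting an edge---never violates a window constraint and never destroys more optimal weight than it creates, so that the clean per-edge survival probability of $\frac{1}{2}$ genuinely holds; this is where the $w=d$ timing and the $3$-match relaxation must be argued together. Second, that after the reduction the instance is a bona fide online bipartite instance to which the per-edge dual bound applies unchanged, i.e.\ that the relaxed re-assignments do not correlate the seeds with the algorithm's choices in a way that breaks the expectation argument. Disentangling the bipartition randomness from the seed randomness cleanly is the crux of the proof.
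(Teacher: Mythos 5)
There is a genuine gap in your accounting, and it costs you exactly the factor of $2$ you need. In vertex-weighted online bipartite matching, the RANKING/Devanur--Jain--Kleinberg guarantee is measured against an optimum that counts \emph{only the offline endpoints' weights}: the dual feasibility bound you quote, $E[\alpha_v+\beta_u]\ge(1-\frac{1}{e})\,w_u$, certifies $w_u$ alone, not $w_u+w_v$. Under your independent per-vertex coin, an optimal OWNBM edge $(j,i)$ of value $w_j+w_i$ is usable when $j$ is online and $i$ offline (probability $\frac{1}{4}$, capturing only $w_i$) or when $j$ is offline and $i$ online (probability $\frac{1}{4}$, capturing only $w_j$ via your reach-back). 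So the expected certified value per optimal edge is $\frac{1}{4}(w_i+w_j)$, and your composition proves only $\frac{1}{4}(1-\frac{1}{e})$-competitiveness. Your sentence ``each edge survives with probability $\frac{1}{2}$, hence the expected optimum of the reduced instance is at least $\frac{1}{2}\,\mathrm{OPT}$'' conflates edge survival with weight capture: when an edge survives, only one endpoint's weight enters the bipartite benchmark.

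The paper avoids this by replacing your independent per-vertex coins with a \emph{single global coin} choosing which endpoint role (origin or destination) is the counted, greedily-selected side. In the destination-type branch every arriving vertex $j$ immediately picks the best white earlier neighbor $i$ maximizing $w_i(1-g(Y_i))$; in the origin-type branch decisions for vertex $i$'s incoming edges are delayed to time $i+d$ (legal since $w=d$), whereupon $i$ picks the best white later neighbor $j$ maximizing $w_j(1-g(Y_j))$ --- note this delayed-decision device, not a reach-back that retroactively rematches an already-processed online vertex (which would break RANKING's arrival semantics, since an online vertex cannot have chosen over offline neighbors that had not yet arrived). Crucially, in each branch \emph{no edges are destroyed}: every vertex is eligible on both sides because the output is a semi-matching (at most one matched edge in and one out per vertex), so the DJK analysis certifies, per optimal edge, the full weight of the counted endpoint --- the paper's ``half-weight'' --- and averaging over the one coin gives $\frac{1}{2}(1-\frac{1}{e})(w_i+w_j)$ per optimal edge. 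The $3$-match relaxation and edge deletion are then used for a different purpose than in your sketch: not to rescue wrong-orientation edges, but to convert the semi-matching (whose matched edges may form paths) into a $3$-matching online while preserving the entire half-weight, with any deletion of $(j,i)$ happening within $d$ steps of $j$'s arrival. To repair your argument you would have to either correlate the role coins so that both endpoints of each optimal edge get counted (which is what the global coin accomplishes) or strengthen the dual argument to credit online endpoints, which standard RANKING does not support.
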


Note that in Theorem~\ref{thm:vertex-weighted} above, while the online solution is allowed to match three rides together, the competitive ratio is measured against the offline optimal which only matches two rides together. (See details in sec~\ref{sec:model}.)

To the best of our knowledge, online matching has not yet been studied for non-bipartite graphs in adversarial settings. Our model is simple and yet powerful to capture the essential aspects of matching rides in a ride-sharing platform. We believe one of the main novelties of our work is that our model allows for studying online matching in non-bipartite graphs. The notion of incidence and matching windows that we propose leads to an interesting theoretical model that we believe is of independent interest and can find further use outside of the immediate context of ride-sharing.

\subsection{Related work}
\label{subsec:related}
Recently, there has been a lot of research interest in ride-sharing and associated challenges and problems. Much of the work has been done in transportation-related literature. A recent study by Santi et al~\cite{S14} showed about $80\%$ of rides in Manhattan can be shared by two riders. A lot of studies related to fleet management considers ride-sharing without pooling requests, e.g~\cite{PSFR12, ZP14, S14, SSGF16}. There has also been a lot of research interest in studying autonomous ride-sharing systems~\cite{PSFR12, SSGF16, CA16}. Heuristic-based solutions to the rider and driver assignment problem were studied in~\cite{AESW11, MZW13}. Alonso-Mora et al.~\cite{ASWFR17} studied real-time high-capacity ride-sharing and route generation. However, their work does not provide a theoretical graph-theoretic online matching formulation for matching rides.

Starting with the seminal work of Karp, Vazirani and Vazirani~\cite{KVV90}, online matching has been widely studied and a vast amount of literature exists on the topic. However, such literature, being mostly motivated by the real-life applications such as online advertising, kidney exchange and online dating, considers the problem in the bipartite setting. The book by Mehta~\cite{M12} gives a detailed overview of this literature. The work of Karp, Vazirani and Vazirani~\cite{KVV90} gave an optimal online algorithm with competitive ratio $(1 - \frac{1}{e})$ for unweighted bipartite matching in the adversarial arrival model. The vertex-weighted version was studied by Aggarwal, Goel, Karande and Mehta~\cite{AGKM11} where they gave an optimal $(1 - \frac{1}{e})$ ratio. The edge-weighted version was studied under the additional economic assumption of free disposal by Feldman, Korula, Mirrokni, Muthukrishnan and Pál~\cite{FKMM09} where they gave a $(1 - \frac{1}{e})$ competitive ratio.

Feldman, Mehta, Mirrokni and Muthukrishnan~\cite{FMMM09} were the first to beat the $(1 - \frac{1}{e})$ bound for the unweighted version assuming IID arrival model and gave a competitive ratio of $0.67$. This was improved by Manshadi, Gharan and Saberi~\cite{MGS12} to $0.705$. Jaliet and Lu~\cite{JL13} presented adaptive algorithms to further improve the ratio to $0.729$ for the unweighted variant, and also achieved a ratio of $0.725$ for the vertex-weighted variant. Haeupler, Mirrokni and Zadimoghaddam~\cite{HMZ11} first beat the $(1 - \frac{1}{e})$ bound for the edge-weighted variant and achieved a ratio of $0.667$.

Online matching literature for the  non-bipartite setting is extremely limited. Anderson, Ashlagi, Gamarnik and kanoria~\cite{AAGK15} studied a dynamic model of matching under homogeneous and independent stochastic assumptions to minimize average wait-times for agents. Akbarpour, Li and Gharan~\cite{ALG14} studied dynamic matching with stochastically arriving and departing agents. However, as stated earlier, there has not been any work on online matching in the adversarial setting to the best of our knowledge.

\subsection{Organization of the paper}
\label{subsec:organize}
We formally present our model, definitions and results in Section~\ref{sec:model}. The proof of our result for edge-weighted $\mbox{OWNBM}$ (Theorem~\ref{thm:edge-weighted}) is presented in Section~\ref{sec:edge-weighted}, and that for the vertex-weighted case (Theorem~\ref{thm:vertex-weighted}) is presented in Section~\ref{sec:vertex-weighted}. Finally, we conclude in Section~\ref{sec:conclude} with discussion of open problems and future directions for research.

\section{Our Model}
\label{sec:model}
In this section, we formalize definitions and notations related to our model. We also restate our results formally.

\begin{definition}[Online Windowed Non-Bipartite Matching ($\mbox{OWNBM}$)]
An instance $\mathcal{I}$ of the Online Windowed Non-Bipartite Matching problem is a tuple $(\mathcal{G}, d)$ where $\mathcal{G}$ is a graph and $d$ is a non-negative integer. It satisfies the following properties.
\begin{itemize}
\item Vertices of $\mathcal{G}$ arrive sequentially, one at each time step.
\item (Directed) edges of $\mathcal{G}$ originating from a vertex are revealed when the vertex arrives. 
\item {\bf [incidence window ($d$)]} Every (directed) edge in $\mathcal{G}$ originating from a vertex terminates at one of the $d$ preceding vertices. 
\end{itemize}
A matching $\mathcal{M}$ in $\mathcal{I}$ is a subset of edges of $\mathcal{G}$ such that for every vertex $v$ of $\mathcal{G}$, at most one edge of $\mathcal{M}$ is incident on $v$ (either originating from it or terminating at it).
\end{definition}

\begin{definition}[Edge-weighted $\mbox{OWNBM}$]
An edge-weighted $\mbox{OWNBM}$ is an instance $\mathcal{I} = (\mathcal{G}, d)$ of $\mbox{OWNBM}$, where every edge $e$ of $\mathcal{G}$ has an associated non-negative weight $w_e$. The weight of a matching $\mathcal{M}$ in $\mathcal{I}$ is $\sum_{e \in \mathcal{M}} w_e$.
\end{definition}

\begin{definition}[Vertex-weighted $\mbox{OWNBM}$]
A vertex-weighted $\mbox{OWNBM}$ is an instance $\mathcal{I} = (\mathcal{G}, d)$ of $\mbox{OWNBM}$, where every vertex $v$ of $\mathcal{G}$ has an associated non-negative weight $w_v$. The weight of a matching $\mathcal{M}$ in $\mathcal{I}$ is $\sum_{(u,v) \in \mathcal{M}} w_u + w_v$.
\end{definition}

\begin{definition}[$(r, w)$-Competitive algorithm for $\mbox{OWNBM}$]
An ($r$, $w$)-Competitive algorithm $\mbox{ALG}$ for $\mbox{OWNBM}$ with $r \leq 1$ and $w \geq 0$ is an algorithm that, for every instance $\mathcal{I} = (\mathcal{G}, d)$ of $\mbox{OWNBM}$, constructs a matching $\mathcal{M}$ in $\mathcal{I}$ online such that 
\begin{itemize}
\item {\bf [matching window ($w$)]} any edge $e = (j, i)$ terminating at a vertex $i$ included in $\mathcal{M}$ is picked no later than $w$ time steps after the arrival of $i$ (and hence also $j$). An edge once picked cannot be deleted from $\mathcal{M}$ later.
\item {\bf [competitive ratio ($r$)]} $\frac{\mbox{ALG}_\mathcal{I}}{\mbox{OPT}_\mathcal{I}} \geq r$, where $\mbox{ALG}_\mathcal{I}$ is the weight of the matching $\mathcal{M}$ and $\mbox{OPT}_\mathcal{I}$ is the maximum (optimal) weight of a matching in $\mathcal{I}$. The weight of the matching is defined depending on whether it is an edge-weighted or a vertex-weighted instance of $\mbox{OWNBM}$. In case $\mbox{ALG}$ is randomized, $\mathcal{M}$ is a random variable and $\mbox{ALG}_\mathcal{I}$ is the expected weight of $\mathcal{M}$ over the randomness used in $\mbox{ALG}$.
\end{itemize}
\end{definition}

We now restate our main results that we already presented earlier.

\begin{theorem*}[Restatement of Theorem~\ref{thm:edge-weighted}]
There is a randomized $(\frac{1}{4}, d)$-competitive algorithm for edge-weighted $\mbox{OWNBM}$ where $d$ is the incidence window of the $\mbox{OWNBM}$ instance. 
\end{theorem*}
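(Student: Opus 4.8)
The plan is to reduce edge-weighted $\mbox{OWNBM}$ to an online combinatorial auction and invoke the truthful mechanism of Lehmann, Lehmann and Nisan, which achieves a $\frac{1}{2}$-approximation for combinatorial auctions with submodular valuations via greedy allocation. The key observation is that when $w = d$, the matching window and incidence window coincide, so at the moment a vertex $i$ must be decided (exactly $d = w$ steps after its arrival), all vertices it could possibly be matched to have already arrived and all their incident edges are revealed. This means that, from the perspective of any single vertex, the relevant local neighborhood is fully known by decision time, which is what lets us treat each vertex's arrival as an item being auctioned.

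First I would set up the reduction carefully: think of each edge $e = (j,i)$ as a potential ``good'' and each vertex as an ``agent'' that wants at most one edge incident to it. The matching constraint (each vertex has at most one incident edge in $\mathcal{M}$) is exactly the constraint that makes the induced valuation function submodular, since the marginal value of adding an edge to a set is nonincreasing once a vertex is already saturated. I would make precise how a run of the greedy algorithm of Lehmann--Lehmann--Nisan, processing edges in order of decreasing weight (or processing vertices in arrival order and greedily committing the heaviest available incident edge), respects both the online arrival structure and the windowing constraint. The randomization enters to handle the directedness and the fact that a single greedy pass on a non-bipartite graph only guarantees a fraction of the optimum; I expect to lose one factor of $\frac{1}{2}$ from the LLN guarantee and another factor of $\frac{1}{2}$ from a random bipartition or random tie-breaking step that converts the non-bipartite instance into something the greedy auction analysis can be applied to, yielding the final $\frac{1}{4}$.

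The main steps, in order, would be: (1) formalize the valuation functions and verify submodularity; (2) describe the online greedy allocation and argue it is implementable within the matching window when $w = d$, so no edge is committed after its deadline; (3) apply the $\frac{1}{2}$-approximation guarantee of Lehmann--Lehmann--Nisan to the constructed valuations; and (4) account for the second $\frac{1}{2}$ factor via the randomized reduction and take expectations to conclude $\mbox{ALG}_\mathcal{I} \geq \frac{1}{4}\,\mbox{OPT}_\mathcal{I}$.

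The hard part will be step (2): ensuring that the greedy auction's allocation decisions can all be made online, respecting the constraint that every edge terminating at vertex $i$ must be committed within $w$ steps of $i$'s arrival and can never be deleted afterward. In the offline LLN mechanism, goods are processed in a global order determined by all the data at once; here I must argue that the $w = d$ condition guarantees each relevant decision only depends on information available by the deadline, and that commitments made for earlier vertices do not later force an infeasible or suboptimal choice. Reconciling the irrevocability of matching decisions with the greedy order of the auction, and showing the approximation guarantee survives this online restriction, is where the real work lies.
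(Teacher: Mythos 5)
Your first ingredient coincides with the paper's: build a combinatorial auction with submodular valuations, run the Lehmann--Lehmann--Nisan greedy online for a factor $\frac{1}{2}$, and use $w=d$ to argue each vertex's relevant neighborhood is frozen by decision time. But your reduction is set up on the wrong objects, and that breaks the second half of your argument. You make the \emph{edges} the goods, with each vertex a unit-demand agent over its incident edges. In that auction nothing bounds how many allocated edges can share a vertex: each agent holds at most one edge, but a vertex can be an endpoint of arbitrarily many edges held by its neighbors (think of a star where every leaf is allocated its own edge to the common center). So there is no generic way to extract a matching from the greedy allocation while losing only a factor $2$, and your claimed chain ``$\frac12$ from LLN, then $\frac12$ from a randomized conversion'' does not go through. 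The paper instead makes the \emph{vertices} the items: bidder $i$ values a set $S$ at $\max\{\max_{j \in S; (j,i) \in E} w_{(j,i)}, 0\}$. Because each item is allocated at most once (at most one out-edge per vertex is used) and each bidder effectively keeps only its best incoming edge, the resulting edge set is a \emph{semi-matching} --- at most one edge into and at most one edge out of every vertex --- hence a disjoint union of directed paths, since every edge points to an earlier vertex. That structural fact is precisely what the missing conversion step needs.

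Your proposed source of the second $\frac12$ --- ``a random bipartition or random tie-breaking'' --- also fails as stated: independently $2$-coloring vertices retains a given directed edge (tail one color, head the other) with probability only $\frac14$, which would yield $\frac18$ overall. The paper's Algorithm~2 instead colors vertices sequentially along each semi-matching path, forcing color($i$) to be the opposite of the color of the head of $i$'s outgoing semi-matching edge and flipping a fresh fair coin only where no such edge exists; a short induction shows every vertex with an incoming semi-matching edge is green with probability exactly $\frac12$, the green in-edges form a matching, and the expected weight is half the semi-matching's weight, giving $\frac14$ in total. Two smaller corrections: processing edges ``in order of decreasing weight'' is not an online-implementable greedy --- the LLN greedy used here processes \emph{items in arrival order}, assigning each to the bidder of highest marginal value, which is online because bidders not yet arrived have zero marginal value for the current item; and the truthfulness of the LLN mechanism plays no role, only its $\frac12$-approximation guarantee for submodular valuations does.
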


\begin{theorem*}[Restatement of Theorem~\ref{thm:vertex-weighted}]
There is a randomized $(\frac{1}{2} (1 - \frac{1}{e}), d)$-competitive algorithm for vertex-weighted $\mbox{OWNBM}$ where $d$ is the incidence window of the $\mbox{OWNBM}$ instance, if we allow the online algorithm (but not the offline optimal algorithm) to 
\begin{enumerate}
\item form a $3$-matching,and 
\item delete an edge $(j, i)$ from a $3$-set no later than $d$ steps since the arrival of vertex $j$. 
\end{enumerate}
A 3-matching $\mathcal{M}$ in $\mathcal{I}$ is a set of disjoint subsets of vertices of $\mathcal{G}$, each of size $2$ or $3$, such that each $2$-set is an edge in $\mathcal{G}$, and the induced subgraph of each $3$-set has at least $2$ edges. The weight of a $3$-matching is defined as the sum of weight of all vertices included a $2$-set or a $3$-set.
\end{theorem*}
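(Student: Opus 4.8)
The plan is to reduce vertex-weighted $\mbox{OWNBM}$ (with the stated relaxations) to classical \emph{bipartite} online vertex-weighted matching, where the randomized primal-dual method of Devanur, Jain and Kleinberg gives the optimal $(1-\frac1e)$ guarantee, and then to pay a single factor of $\frac12$ for undoing the reduction. The first step is a \textbf{doubling reduction}: build a bipartite graph $H$ with an offline copy $v_L$ and an online copy $v_R$ of every vertex $v$ of $\mathcal G$, and map each directed edge $(j,i)$, with $j$ the later arrival, to the $H$-edge $(j_R,i_L)$. When $v$ arrives in $\mathcal G$ we present $v_R$ online together with its edges to the $L$-copies of the $\le d$ already-arrived preceding neighbours. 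This is a bona fide online bipartite instance, since each online $v_R$ sees only previously arrived offline copies; moreover matching $(j_R,i_L)$ commits the original edge $(j,i)$ at time $j$, i.e. within $w=d$ of the arrival of $i$, so the matching-window is met automatically. We weight each bipartite match $(j_R,i_L)$ by $w_j+w_i$.

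Second, I would \textbf{invoke the primal-dual guarantee}: running the randomized primal-dual algorithm on $H$ produces a random bipartite matching $M_H$ with $E[\mbox{weight}(M_H)]\ge (1-\frac1e)\,\mbox{OPT}_H$. One then checks $\mbox{OPT}_H\ge\mbox{OPT}_\mathcal{I}$: the offline optimum (a $2$-matching) embeds into $H$ by sending each of its edges $(j,i)$ to $(j_R,i_L)$, and because each vertex lies in at most one optimal edge, at most one of its two copies is used, so the image is a feasible matching of $H$ of weight exactly $\mbox{OPT}_\mathcal{I}$. I would arrange the dual variables so that the \emph{full} match-weight $w_j+w_i$ is charged rather than only the offline endpoint; making the randomized dual assignment collect both endpoint weights while staying dual-feasible is precisely where the DJK ``insight'' is needed.

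Third, \textbf{map back and pay $\frac12$}: translating $M_H$ to $\mathcal G$ gives an edge set $F$ in which every vertex has at most one selected outgoing edge (as some $v_R$) and at most one selected incoming edge (as some $v_L$), so the components of $F$ are vertex-disjoint paths and cycles, and any degree-$2$ vertex is exactly the centre of a path of length two, i.e. a $3$-set. Since $\mbox{weight}(M_H)=\sum_v \deg_F(v)\,w_v \le 2\sum_{v:\,\deg_F(v)\ge 1} w_v$, the total weight of the vertices covered by $F$ is at least $\frac12\,\mbox{weight}(M_H)$. This double-counting of the degree-$2$ centre vertices is the sole origin of the factor $\frac12$, and the $3$-matching relaxation is exactly what legitimises keeping a degree-$2$ vertex instead of discarding one of its matches. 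Chaining the three steps yields $E[\mbox{ALG}_\mathcal{I}]\ge \frac12(1-\frac1e)\,\mbox{OPT}_\mathcal{I}$.

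The fourth and hardest part is \textbf{enforcing validity online}. The components of $F$ can a priori be long paths or cycles, whereas a legal $3$-matching allows only components of size $\le 3$, so the algorithm must cap component growth using the permitted deletion of an edge $(j,i)$ within $d$ steps of the arrival of $j$. I expect this to be the main obstacle: I would maintain the invariant that every committed component has size $\le 3$, and whenever the bipartite algorithm would attach a new match to a vertex that is already an endpoint of a $3$-set, delete the offending edge within its deletion window so as to split the chain into valid $2$- and $3$-sets. The real work is to show that such a deletion policy (a) always has a deletable edge available before the $d$-step deadline expires, and (b) is compatible with the primal-dual charging, so that every deleted edge can be re-charged to vertices that remain covered and the $(1-\frac1e)$ bound — and hence the final $\frac12(1-\frac1e)$ ratio — survives the deletions intact.
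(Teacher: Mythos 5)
There is a genuine gap, and it sits exactly where you flagged it but hoped an ``insight'' would rescue you: step 2. Once you put the weight $w_j + w_i$ on the bipartite edge $(j_R, i_L)$, the instance is no longer vertex-weighted on the offline side --- it is edge-weighted online bipartite matching with immediate, irrevocable commitment at the arrival of $j_R$, and that problem admits \emph{no constant} competitive ratio, let alone $(1-\frac{1}{e})$. Concretely: let a vertex $i$ of weight $0$ arrive, followed within the incidence window by $j_1, \ldots, j_k$ (with $k \leq d$) of weights $1, W, W^2, \ldots, W^{k-1}$, each having $(j_t, i)$ as its only edge. In your $H$, all of $j_{1,R}, \ldots, j_{k,R}$ compete for the single offline copy $i_L$, each must decide at its own arrival, and your phase-2 matching performs no deletions; the standard escalating-weights argument then forces expected weight $O(\frac{1}{k})\cdot W^{k-1}$ against $\mbox{OPT}_\mathcal{I} = W^{k-1}$. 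The relaxation's deletion power cannot repair this inside your pipeline either: the theorem only permits deleting an edge \emph{from a $3$-set}, so it is not a general free-disposal license for the $2$-sets this instance produces, and even with full free disposal, edge-weighted online matching at ratio $(1-\frac{1}{e})$ is not something DJK gives you.

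The paper's construction avoids precisely this trap, and its factor $\frac{1}{2}$ has a different origin than yours. It flips a fair coin and, in each branch, collects only \emph{one} endpoint's weight per edge (the ``half-weight'' of Definition~\ref{def:half-weight}); the coin flip is where the $\frac{1}{2}$ is paid, not degree double-counting. The destination-type branch is essentially your reduction restricted to the earlier endpoint's weight. The origin-type branch is the part your reduction structurally cannot express: it exploits $w = d$ to let each vertex $i$ \emph{defer} its choice until time $i + d$, by which point every potential origin $k$ with edge $(k,i)$ has arrived, so $i$ selects $\argmax_k w_k(1 - g(Y_k))$ among all of them --- the origins play the perturbed ``offline'' role of DJK, and the escalating-chain instance above is handled because $i$ simply waits and picks $j_k$. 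Each branch is then $(1-\frac{1}{e})$-competitive against the corresponding side of the optimal matching, summing to expected half-weight $\frac{1}{2}(1-\frac{1}{e})\mbox{OPT}_\mathcal{I}$ (Lemma~\ref{lem:vertex-semi-matching-half-weight}). Finally, the conversion from semi-matching to $3$-matching (Lemma~\ref{lem:vertex-matching-weight}) is weight-\emph{preserving} relative to half-weight --- the $3$-matching counts all covered vertices while half-weight counted one endpoint per edge, so no further factor of $2$ is lost there, whereas your step 3 spends a second $\frac{1}{2}$ at that point. Your step 4 corresponds roughly to the paper's (omitted) Lemma~\ref{lem:vertex-matching-weight} and is a reasonable sketch, but without a valid phase 2 the chain of inequalities does not start.
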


Note that we do not assume any distribution on edges for our results above. In particular, our results hold for the adversarial model for the arriving vertices and edges.

It will be convenient to define a notion of {\em semi-matching} for the proofs of both the above theorems.

\begin{definition}[Semi-matching]
Given an instance $\mathcal{I} = (\mathcal{G}, d)$ of $\mbox{OWNBM}$, a semi-matching $\mathcal{M'}$ in $\mathcal{I}$ is a subset of edges of $\mathcal{G}$ such that for every vertex $v$ of $\mathcal{G}$, at most one edge of $\mathcal{M'}$ originates from $v$ {\em and} at most one edge of $\mathcal{M'}$ terminates at $v$.  
\end{definition}

\section{Edge-Weighted $\mbox{OWNBM}$}
\label{sec:edge-weighted}
In this section, we prove Theorem~\ref{thm:edge-weighted}. We will need a beautiful result from combinatorial auction theory by Lehmann, Lehmann and Nisan~\cite{LLN01}. First, we define an instance of a combinatorial auction from a given instance of $\mbox{OWNBM}$.

\begin{definition}
\label{def:auction}
Let $\mathcal{I} = (\mathcal{G}, d)$ be an instance of edge-weighted $\mbox{OWNBM}$, and let $\mathcal{G} = (V, E)$ where $V$ and $E$ are the sets of vertices and edges of $\mathcal{G}$ respectively. Let $|V| = n$. Define a combinatorial auction with $n$ bidders and $n$ items. Without loss of generality and slightly abusing notation, we assume the vertex set, the item set and the bidder set to be $[n]$. The valuation $\mbox{val}_i$ of the $i$'th bidder for a set $S$ of items is given by the following.
\[
\mbox{val}_i(S) = \max \{\max_{j \in S;(j,i) \in E} w_{(j,i)}, 0\}
\]
\end{definition}

The following lemma upper bounds the weight of optimal matching in a edge-weighted $\mbox{OWNBM}$ by the valuation of optimal allocation of the associated combinatorial auction.

\begin{lemma}
\label{lem:auction-ub}
Let $\mathcal{I} = (\mathcal{G}, d)$ be an instance of edge-weighted $\mbox{OWNBM}$ and $\mbox{OPT}_\mathcal{I}$ be the weight of an optimal matching. Then the optimal allocation for the combinatorial auction as defined in Definition~\ref{def:auction} has valuation at least $\mbox{OPT}_\mathcal{I}$.
\end{lemma}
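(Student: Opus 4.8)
The plan is to exhibit, from any matching $\mathcal{M}$ in the $\mbox{OWNBM}$ instance $\mathcal{I}$, a feasible allocation for the combinatorial auction whose total valuation equals the weight of $\mathcal{M}$; applying this to an optimal matching achieving $\mbox{OPT}_\mathcal{I}$ then gives an allocation of valuation $\mbox{OPT}_\mathcal{I}$, which is at most the optimal allocation value, proving the lemma. Recall that an allocation in the auction is a partition of (a subset of) the items $[n]$ among the bidders $[n]$, and its value is $\sum_i \mbox{val}_i(S_i)$ where $S_i$ is the item-set assigned to bidder $i$.

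First I would fix an optimal matching $\mathcal{M}$ with $\sum_{e \in \mathcal{M}} w_e = \mbox{OPT}_\mathcal{I}$. For each edge $e = (j,i) \in \mathcal{M}$ (directed from the later-arriving vertex $j$ to the preceding vertex $i$), I would assign item $j$ to bidder $i$, i.e.\ place $j$ in $S_i$. All items not arising this way are left unallocated (equivalently handed to some bidder who values them $0$). The key point is that this assignment is a valid allocation: since $\mathcal{M}$ is a matching, every vertex is incident to at most one edge of $\mathcal{M}$, so each item $j$ is assigned to at most one bidder (no item is double-allocated), and each bidder $i$ receives at most one item (so each $S_i$ has size at most one). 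Then for the bidder $i$ receiving item $j$ via edge $(j,i)$, the definition of $\mbox{val}_i$ gives $\mbox{val}_i(S_i) = \mbox{val}_i(\{j\}) = \max\{w_{(j,i)}, 0\} = w_{(j,i)}$ because the weights are non-negative and $(j,i) \in E$; bidders receiving no item contribute $0$.

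Summing over all bidders, the total valuation of this allocation is exactly $\sum_{(j,i) \in \mathcal{M}} w_{(j,i)} = \mbox{OPT}_\mathcal{I}$. Since the optimal auction allocation maximizes total valuation over all feasible allocations, its value is at least that of the particular allocation just constructed, namely at least $\mbox{OPT}_\mathcal{I}$, which is the claim.

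The step requiring the most care is verifying feasibility of the induced allocation, and in particular reconciling the matching condition with the auction's bidder/item structure. One subtlety is that the auction as defined has a single index set $[n]$ playing the role of vertices, items, and bidders simultaneously, so I must be careful that using vertex $j$ as an \emph{item} and vertex $i$ as a \emph{bidder} does not create a conflict — it does not, because the allocation only needs each item assigned once and each bidder's bundle to be well-defined, and the matching property controls incidences on \emph{both} endpoints. A second point worth noting is the valuation's $\max$ over incoming edges: even if a bidder $i$ could in principle be allocated several items, the valuation is defined as the best single incoming edge weight, so the construction loses nothing by assigning exactly one item per matched bidder, and the equality above is tight. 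I expect no genuine obstacle here; the lemma is essentially a reformulation of a matching as an allocation, and the content lies entirely in checking that the $\mbox{OWNBM}$ matching constraint translates cleanly into auction feasibility.
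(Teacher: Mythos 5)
Your proof is correct and takes essentially the same approach as the paper: both convert an optimal matching into an auction allocation by giving bidder $i$ the item $j$ for each matched edge $(j,i)$, so each matched bidder's valuation is exactly $w_{(j,i)}$ and the total valuation is $\mbox{OPT}_\mathcal{I}$. The only cosmetic difference is that the paper makes the allocation a full partition (setting $S_i \leftarrow \{i,j\}$, $S_j \leftarrow \emptyset$, and $S_k \leftarrow \{k\}$ for unmatched $k$), whereas you leave non-matched items unallocated; the extra items contribute zero valuation, so this changes nothing.
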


\begin{proof}
To prove the lemma, we produce an allocation with valuation $\mbox{OPT}_\mathcal{I}$. Consider an optimal matching $\mathcal{M}$ in $\mathcal{I}$. For any edge $(j, i) \in \mathcal{M}$, set $S_i \leftarrow \{i, j\}$ and $S_j \leftarrow \emptyset$. For any unmatched vertex $k$ of $\mathcal{G}$, set $S_k \leftarrow \{k\}$. It is easy to see that the valuation of this allocation is $\mbox{OPT}_\mathcal{I}$, which proves the lemmma.
\end{proof}

Next, we show that the valuation functions for the auction derived from an edge-weighted $\mbox{OWNBM}$ are sub-modular.

\begin{lemma}
\label{lem:submodular}
The valuation functions for all the bidders in Definition~\ref{def:auction} are sub-modular.
\end{lemma}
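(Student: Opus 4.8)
The plan is to verify the defining inequality of submodularity directly from the explicit formula for $\mbox{val}_i$. Recall that a set function $f$ is submodular if and only if for all sets $S \subseteq T$ and every item $x \notin T$, we have $f(S \cup \{x\}) - f(S) \geq f(T \cup \{x\}) - f(T)$; equivalently, the marginal value of adding an item is non-increasing as the base set grows. Since each bidder's valuation has the same structure, it suffices to fix an arbitrary bidder $i$ and show that $\mbox{val}_i$ satisfies this diminishing-marginal-returns property.

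The key observation I would exploit is that $\mbox{val}_i(S)$ is essentially a maximum of edge weights: it equals the largest weight among edges $(j,i) \in E$ with $j \in S$, or $0$ if no such edge exists (this is why the outer $\max$ with $0$ is present, handling the empty case and keeping valuations non-negative). So $\mbox{val}_i$ is a maximum-type valuation over a ground set of ``singleton values'' $v_j := w_{(j,i)}$ for those $j$ with $(j,i) \in E$, and $v_j := 0$ otherwise. First I would argue that any function of the form $S \mapsto \max\{\max_{j \in S} v_j, 0\}$ with non-negative $v_j$ is monotone (adding items can only enlarge the set over which we maximize) and submodular. Monotonicity is immediate. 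For submodularity, I would fix $S \subseteq T$ and $x \notin T$ and compute the two marginal gains: adding $x$ to $S$ raises the maximum from $\mbox{val}_i(S)$ to $\max\{\mbox{val}_i(S), v_x\}$, so the marginal gain is $\max\{v_x - \mbox{val}_i(S), 0\}$, and similarly the gain from adding $x$ to $T$ is $\max\{v_x - \mbox{val}_i(T), 0\}$. Since $S \subseteq T$ implies $\mbox{val}_i(S) \leq \mbox{val}_i(T)$ by monotonicity, the function $t \mapsto \max\{v_x - t, 0\}$ is non-increasing in $t$, hence the marginal gain over $S$ is at least the marginal gain over $T$, which is exactly the submodularity inequality.

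I do not expect a serious obstacle here, as the result is essentially the standard fact that ``maximum of singleton values'' (a weighted-rank or single-minded-style valuation) is submodular. The only point requiring a little care is the presence of the outer $\max$ with $0$ and the restriction to $j$ with $(j,i) \in E$: I would handle both cleanly by setting $v_j = 0$ whenever $(j,i) \notin E$, so that the valuation reduces uniformly to $\mbox{val}_i(S) = \max\{\max_{j \in S} v_j, 0\}$ over all $j \in S$, after which the monotone diminishing-returns argument above applies verbatim. Since the argument holds for each bidder $i$ independently, this establishes submodularity for all valuation functions in Definition~\ref{def:auction}.
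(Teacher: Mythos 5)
Your proposal is correct and takes essentially the same route as the paper's proof: both reduce submodularity of $\mbox{val}_i$ to the fact that a maximum over a larger set can only be larger, the paper phrasing this as a contradiction (a larger marginal over $T$ would force $\max_{j \in S;(j,i) \in E} w_{(j,i)} > \max_{j \in T;(j,i) \in E} w_{(j,i)}$, impossible for $S \subseteq T$) while you compute the marginals explicitly as $\max\{v_x - \mbox{val}_i(\cdot), 0\}$ and invoke monotonicity. Your write-up simply makes explicit the step the paper leaves as ``it is not hard to see,'' including the clean normalization $v_j = 0$ for non-edges.
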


\begin{proof}
Fix bidder $i$ and consider its valuation function $\mbox{val}_i$ as defined in Definition~\ref{def:auction}. Also consider sets $S \subseteq T$ of items and an item $k$. We will show that the marginal valuation of item $k$ given $T$ is at most the marginal valuation of $k$ given $S$, i.e., $\mbox{val}_i(k | T) \leq \mbox{val}_i(k | S)$. Assume the contrary. It is not hard to see that this will imply that
\[
\max_{j \in S;(j,i) \in E} w_{(j,i)} > \max_{j \in T;(j,i) \in E} w_{(j,i)}.
\]
Since $S \subseteq T$, this is an impossibility.
\end{proof}

Lemma~\ref{lem:submodular} lets us use the following 2-approximation from ~\cite{LLN01} for combinatorial auctions with sub-modular valuations. (Holds for $n$ bidders and $m$ items; in our case, $n = m$.)

\begin{algorithm}[t]
	\SetAlgoNoLine
	\KwIn{$n$ bidders with submodular valuations and $m$ items}
	\KwOut{A $\frac{1}{2}$-approximation to the optimal allocation}
	$S_1 = S_2 = \ldots = S_n \leftarrow \emptyset$\;
	\For{each item $j = 1, 2, \ldots, m$}
	{
		let $i$ be the bidder with highest marginal valuation $\mbox{val}_i(j | S_i)$ for item $j$\;
		$S_i \leftarrow S_i \cup \{j\}$\;
	}
	return $S_1 = S_2 = \ldots = S_n$;
	\caption{$\frac{1}{2}$-approximation for combinatorial auctions with submodular valuations}
	\label{alg:auction}
\end{algorithm}

\begin{theorem*}[Theorem 11 of~\cite{LLN01}]
\label{thm:auction-approx}
If the valuation functions of all the bidders are submodular, then Algorithm~\ref{alg:auction} is a $\frac{1}{2}$-approximation for the optimal allocation.
\end{theorem*}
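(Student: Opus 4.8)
The plan is to directly compare the social welfare of the greedy allocation produced by Algorithm~\ref{alg:auction} against that of an optimal allocation, and show the former is at least half the latter. Throughout I use that the submodular valuations here are monotone and normalized ($\mbox{val}_i(\emptyset) = 0$, free disposal), as is standard for combinatorial auctions and as holds for the valuations of Definition~\ref{def:auction}. Let $(S_1, \ldots, S_n)$ be the allocation returned by the greedy algorithm and let $(O_1, \ldots, O_n)$ be an optimal allocation. For each item $j$ let $S_i^{j}$ denote bidder $i$'s bundle immediately after item $j$ has been processed, and write $p_j$ for the marginal value realized when item $j$ is allocated, i.e. $p_j = \mbox{val}_{i(j)}(j | S_{i(j)}^{j-1})$ where $i(j)$ is the bidder that greedily receives $j$. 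A first, easy observation is a telescoping identity: since each bidder's final value is the sum of the marginal gains of the items it accumulates (using $\mbox{val}_i(\emptyset) = 0$), the greedy welfare equals $\sum_i \mbox{val}_i(S_i) = \sum_{j=1}^{m} p_j$, and each $p_j \ge 0$ by monotonicity.

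The main work is to bound the optimal welfare bidder by bidder. For a fixed bidder $i$, monotonicity gives $\mbox{val}_i(O_i) \le \mbox{val}_i(O_i \cup S_i)$, and adding the elements of $O_i \setminus S_i$ to $S_i$ one at a time, submodularity (diminishing returns) yields
\[
\mbox{val}_i(O_i) \le \mbox{val}_i(S_i) + \sum_{j \in O_i \setminus S_i} \mbox{val}_i(j | S_i).
\]
I then apply submodularity a second time, this time along the time axis of the algorithm: since $S_i^{j-1} \subseteq S_i$, diminishing returns give $\mbox{val}_i(j | S_i) \le \mbox{val}_i(j | S_i^{j-1})$. Finally, by the greedy selection rule, at the moment item $j$ was processed the winning bidder had the largest marginal value, so $\mbox{val}_i(j | S_i^{j-1}) \le p_j$ for \emph{every} bidder $i$. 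Chaining these two inequalities with the per-bidder bound gives $\mbox{val}_i(O_i) \le \mbox{val}_i(S_i) + \sum_{j \in O_i \setminus S_i} p_j$.

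The final step is to sum over bidders and exploit disjointness. Summing the last inequality over $i$ gives $\sum_i \mbox{val}_i(O_i) \le \sum_i \mbox{val}_i(S_i) + \sum_i \sum_{j \in O_i \setminus S_i} p_j$. Because the optimal bundles $O_i$ are pairwise disjoint, each item $j$ appears in at most one $O_i$, so the double sum counts each $p_j$ at most once; combined with $p_j \ge 0$ and the telescoping identity this yields $\sum_i \sum_{j \in O_i \setminus S_i} p_j \le \sum_{j=1}^m p_j = \sum_i \mbox{val}_i(S_i)$. Hence $\sum_i \mbox{val}_i(O_i) \le 2 \sum_i \mbox{val}_i(S_i)$, which is exactly the claimed $\frac{1}{2}$-approximation.

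The step I expect to be the crux is the two-fold use of submodularity, and getting the direction of both diminishing-returns inequalities right: first charging each optimal bundle $O_i$ against its greedy counterpart $S_i$ plus a sum of \emph{present-state} marginals $\mbox{val}_i(j | S_i)$, and then upper bounding each such marginal by the \emph{realized} greedy price $p_j$ through the nesting $S_i^{j-1} \subseteq S_i$ together with the greedy maximality rule. Everything else — the telescoping identity and the disjointness argument — is bookkeeping, but the appeal to the greedy rule is what converts a per-bidder comparison into the global factor-$2$ bound.
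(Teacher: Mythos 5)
Your proof is correct, but note a point of comparison: the paper does not prove this statement at all --- it is imported verbatim as Theorem 11 of~\cite{LLN01}, so there is no in-paper argument to match against. Your charging argument is the standard one for greedy submodular welfare maximization (in the spirit of Fisher--Nemhauser--Wolsey), and it is a non-inductive streamlining of the proof in~\cite{LLN01}, which proceeds by induction on items, bounding the loss in the residual optimum by twice the realized marginal $p_1$ of the first allocated item and recursing; your version replaces that induction with the telescoping identity $\sum_i \mbox{val}_i(S_i) = \sum_j p_j$ plus the two diminishing-returns inequalities $\mbox{val}_i(j \mid S_i) \leq \mbox{val}_i(j \mid S_i^{j-1}) \leq p_j$ and the disjointness of the $O_i$, which is arguably cleaner and makes the tightness of the factor $2$ transparent. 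One substantive point you handled well and that deserves emphasis: the theorem as stated in the paper hypothesizes only submodularity, but every proof (yours and the original) additionally needs the valuations to be monotone and normalized ($\mbox{val}_i(\emptyset) = 0$), since otherwise $p_j \geq 0$ and $\mbox{val}_i(O_i) \leq \mbox{val}_i(O_i \cup S_i)$ both fail; these hypotheses are part of the model in~\cite{LLN01}, and you correctly verified that they hold for the valuations of Definition~\ref{def:auction}, where $\mbox{val}_i(S)$ is a maximum over a set that only grows with $S$. No gaps otherwise: the nesting $S_i^{j-1} \subseteq S_i$ with $j \notin S_i$ justifies the second submodularity application, and each item is counted at most once in the double sum because the optimal bundles are pairwise disjoint.
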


A useful property of the above theorem is that it works in our online setting as shown in the following lemma.

\begin{lemma}
\label{lem:auction-approx}
Let $\mathcal{I} = (\mathcal{G}, d)$ be an instance of edge-weighted $\mbox{OWNBM}$ and $\mbox{OPT}_\mathcal{I}$ be the weight of an optimal matching. Then we can construct an allocation for the combinatorial auction as defined in Definition~\ref{def:auction} online with valuation at least $\frac{1}{2} \mbox{OPT}_\mathcal{I}$. Here, the $i$'th bidder and the $i$'th item arrives with the arrival of vertex $i$.
\end{lemma}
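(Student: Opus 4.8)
The plan is to observe that Algorithm~\ref{alg:auction} is already an item-by-item greedy procedure, so ordering its items to match the vertex arrival order of $\mathcal{I}$ turns it into an online algorithm that produces exactly the same allocation, with no loss.

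First I would pin down precisely what information the algorithm needs at the step that processes item $j$. At that step it assigns item $j$ to the bidder $i$ maximizing the marginal valuation $\mbox{val}_i(j \mid S_i)$. From Definition~\ref{def:auction}, the increment $\mbox{val}_i(S_i \cup \{j\}) - \mbox{val}_i(S_i)$ is strictly positive only when $(j,i) \in E$; a bidder $i$ with no edge $(j,i)$ has marginal valuation $0$ for item $j$. Hence the only bidders that can actually receive item $j$ (up to an arbitrary tie-break among those at value $0$) are those $i$ for which vertex $j$ has an outgoing edge $(j,i)$. By the incidence-window property every such $i$ is one of the $d$ vertices preceding $j$, so all candidate recipients have already arrived, and the weights $w_{(j,i)}$ that govern their marginal valuations are exactly the edges revealed when vertex $j$ arrives.

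Next I would argue that the current sets $S_i$ of these candidate bidders, and hence the values $\mbox{val}_i(S_i)$ needed to evaluate each marginal valuation, are fully determined by the assignments of items $1, \ldots, j-1$, all of which have already been decided. Consequently the entire greedy step for item $j$ can be carried out the moment vertex $j$ arrives, using only information available at that time. Running the greedy assignment in the order $j = 1, 2, \ldots, n$ therefore reproduces the offline allocation of Algorithm~\ref{alg:auction} online, under the stated convention that bidder $i$ and item $i$ arrive together with vertex $i$.

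Finally I would chain together the guarantees already in hand: by Theorem 11 of~\cite{LLN01} the allocation produced has valuation at least $\frac{1}{2}$ that of the optimal allocation, and by Lemma~\ref{lem:auction-ub} the optimal allocation has valuation at least $\mbox{OPT}_\mathcal{I}$, so the online allocation has valuation at least $\frac{1}{2}\mbox{OPT}_\mathcal{I}$. I expect the only delicate point to be the first step, namely establishing that a bidder's marginal valuation for an item is nonzero only across an edge, so that the set of possible recipients of item $j$ is confined to the already-arrived predecessors within the incidence window; once this locality is in place, both the online realizability and the approximation ratio follow immediately.
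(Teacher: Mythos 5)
Your proposal is correct and follows essentially the same route as the paper: both run the greedy allocation of Algorithm~\ref{alg:auction} sequentially as items arrive and justify this via the observation that a bidder's marginal valuation for item $j$ is nonzero only when $(j,i) \in E$, so the incidence window confines candidate recipients to already-arrived bidders, after which Theorem 11 of~\cite{LLN01} and Lemma~\ref{lem:auction-ub} are chained exactly as in the paper. Your write-up is in fact somewhat more explicit than the paper's, which merely asserts these two facts without spelling out the edge-locality argument.
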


\begin{proof}
From Theorem~\ref{thm:auction-approx} and Lemma~\ref{lem:submodular}, we immediately see that we can construct an allocation with valuation at least half that of the optimal allocation offline once all the vertices of $\mathcal{G}$ have arrived. Combining with Lemma~\ref{lem:auction-ub} gives the valuation of this offline allocation to be at least $\frac{1}{2} \mbox{OPT}_\mathcal{I}$. The lemma follows by noting that the allocation can be done online (that is, items and bidders corresponding to a vertex $i$ arrives with the arrival of vertex $i$ and item $i$ is allocated to a bidder as soon as the item arrives) due to the following two facts.
\begin{enumerate}
\item the $\frac{1}{2}$-approximation algorithm presented above can be run sequentially as items arrive.
\item the marginal valuation of an item for a bidder that has not arrived yet when the item arrives is $0$.
\end{enumerate}
\end{proof}

The following lemma shows that we can do an approximation-preserving transformation from a valuation to a semi-matching online.

\begin{lemma}
\label{lem:edge-semi-matching}
Let $\mathcal{I} = (\mathcal{G}, d)$ be an instance of edge-weighted $\mbox{OWNBM}$ and $\mbox{OPT}_\mathcal{I}$ be the weight of an optimal matching. Then we can construct a
semi-matching $\mathcal{M'}$ of weight $\frac{1}{2} \mbox{OPT}_\mathcal{I}$ online such that any edge $(j, i) \in \mathcal{M'}$ terminating at vertex $i$ is picked no later than $d$ time steps after its arrival. 
\end{lemma}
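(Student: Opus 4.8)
The plan is to turn the online combinatorial-auction allocation guaranteed by Lemma~\ref{lem:auction-approx} into a semi-matching of the same weight, and then to check that this transformation respects the online timing requirement. Recall that Lemma~\ref{lem:auction-approx} produces, online, an allocation $S_1, \ldots, S_n$ with $\sum_i \mbox{val}_i(S_i) \geq \frac{1}{2}\mbox{OPT}_\mathcal{I}$, where bidder/item $i$ arrives with vertex $i$. For each bidder $i$ with $\mbox{val}_i(S_i) > 0$, let $j^*(i)$ be the item in $S_i$ realizing the maximum in $\mbox{val}_i(S_i)$, i.e.\ the $j \in S_i$ with $(j,i) \in E$ maximizing $w_{(j,i)}$, breaking ties arbitrarily. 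I would then define $\mathcal{M'} = \{(j^*(i), i) : \mbox{val}_i(S_i) > 0\}$, so that by construction the weight of $\mathcal{M'}$ is $\sum_i w_{(j^*(i), i)} = \sum_i \mbox{val}_i(S_i) \geq \frac{1}{2}\mbox{OPT}_\mathcal{I}$.

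Next I would verify that $\mathcal{M'}$ is a genuine semi-matching. At most one edge of $\mathcal{M'}$ terminates at any vertex $i$, since bidder $i$ contributes the single edge $(j^*(i), i)$. For the originating constraint, the edge $(j^*(i), i)$ uses item $j^*(i) \in S_i$; since Algorithm~\ref{alg:auction} allocates each item to exactly one bidder, the sets $S_i$ are pairwise disjoint, so any item $j$ can equal $j^*(i)$ for at most one $i$, whence at most one edge of $\mathcal{M'}$ originates from $j$. Both semi-matching conditions therefore hold.

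The step I expect to be the main obstacle is the online timing constraint, because $j^*(i)$ can change as items arrive: whenever a newly allocated item $j \in S_i$ has $w_{(j,i)}$ exceeding the current maximum, $j^*(i)$ is updated. The key observation that resolves this is that the incidence window bounds when such updates can happen. Any item $j$ contributing to $\mbox{val}_i(S_i)$ must satisfy $(j,i) \in E$, which forces $i$ to be among the $d$ vertices preceding $j$, i.e.\ $j \in \{i+1, \ldots, i+d\}$. Hence no item arriving after time $i+d$ can alter $\mbox{val}_i(S_i)$ or $j^*(i)$, so we may defer committing the edge terminating at $i$ until time $i+d$, at which point $j^*(i)$ is final and no deletion is ever needed. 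Finally I would check the deadline: the committed edge $(j^*(i), i)$ is revealed at time $j^*(i) \in \{i+1, \ldots, i+d\}$ and is committed at time $i+d$; since $i < j^*(i) \leq i+d$, we have $0 \leq (i+d) - j^*(i) < d$, so the edge is committed after it is revealed and within $d$ steps of its arrival, exactly as the lemma requires.
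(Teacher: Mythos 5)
Your proposal is correct and follows essentially the same route as the paper's proof: you invoke the online allocation of Lemma~\ref{lem:auction-approx}, select for each bidder $i$ the argmax edge realizing $\mbox{val}_i(S_i)$, derive the semi-matching property from the disjointness of the allocation sets, and use the incidence window to argue that $S_i$ is frozen (and the edge can be committed) by time $i+d$. Your explicit deadline arithmetic is a slightly more careful rendering of the paper's observation that the allocation of bidder $i$ is frozen $d$ steps after vertex $i$ arrives, but the argument is the same.
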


\begin{proof}
Start by setting $\mathcal{M'} \leftarrow \emptyset$. Construct the allocation for the combinatorial auction online as in the proof of Lemma~\ref{lem:auction-approx}. Note that an item $j$ is allocated as soon as it arrives. As a result, given the incidence window property of $\mathcal{I}$, the allocation of bidder $i$ , $S_i$ is frozen $d$ time steps after arrival of vertex $i$. For a vertex $i$, if its valuation $\mbox{val}(S_i) = 0$ after $d$ steps from its arrival, then do not pick any edge terminating at vertex $i$. Otherwise, set $\mathcal{M'} \leftarrow \mathcal{M'} \cup \{(j, i)\}$, where
\[
j = \argmax_{k \in S_i} w_{(k, i)}.
\]
It is clear that $\mathcal{M'}$ is a semi-matching as every item is allocated to at most one bidder which means at most one edge of $\mathcal{M'}$ originates from a given vertex $j$. Also, by construction, we include at most one edge terminating at a given vertex $i$ in $\mathcal{M'}$. It is also clear that the weight of the semi-matching thus constructed is the same as the valuation of the allocation, because for any $i$, the valuation of the allocation for bidder $i$ equals the weight of the edge terminating at vertex $i$ included in $\mathcal{M'}$. Therefore, by Lemma~\ref{lem:auction-approx} weight of the constructed semi-matching $\mathcal{M'}$ is at least $\frac{1}{2} \mbox{OPT}_\mathcal{I}$.
\end{proof}

The remaining part in the proof of Theorem~\ref{thm:edge-weighted} is to show how can we go from a semi-matching of weight $w$ to a matching of weight at least $w/2$ online.
Consider Algorithm~\ref{alg:edge-matching} that takes a semi-matching $\mathcal{M'}$ and outputs a matching $\mathcal{M}$ in an edge-weighted $\mbox{OWNBM}$.

\begin{algorithm}[t]
	\SetAlgoNoLine
	\caption{Constructing matching from semi-matching for edge-weighted $\mbox{OWNBM}$}
	\label{alg:edge-matching}
	\KwIn{A semi-matching $\mathcal{M'}$ in an edge-weighted $\mbox{OWNBM}$ $\mathcal{I}$}
	\KwOut{A (random) matching $\mathcal{M}$ in $\mathcal{I}$ with expected weight at least half the weight of $\mathcal{M'}$}
	$\mathcal{M} \leftarrow \emptyset$\;
	\For{each vertex $i$ in increasing order}
	{
		\If{$(j, i) \in \mathcal{M'}$ for some vertex $j$}
		{
			\eIf{$(i, k) \in \mathcal{M'}$ for some vertex $k$}
			{
				\If{color($k$) == green}
				{
					color($i$) = red\;
				}
				\If{color($k$) == red}
				{
					color($i$) = green\;	
				}
			}
			{
				choose color($i$) green or red with probability 1/2 each\;
			}
			\If{color($i$) == green}
			{
				$\mathcal{M} \leftarrow \mathcal{M} \cup (j, i)$\;
			}
		}
	}
	return $\mathcal{M}$;
\end{algorithm}

\begin{lemma}
\label{lem:edge-matching-weight}
Let $\mathcal{I} = (\mathcal{G}, d)$ be an instance of edge-weighted $\mbox{OWNBM}$ and $\mathcal{M'}$ be a semi-matching in $\mathcal{I}$ of weight $w$. Then Algorithm~\ref{alg:edge-matching} constructs a random matching $\mathcal{M}$ in $\mathcal{I}$ of expected weight $w/2$.
\end{lemma}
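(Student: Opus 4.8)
The plan is to exploit the rigid structure that the semi-matching $\mathcal{M}'$ imposes. First I would observe that, since every vertex has at most one outgoing and at most one incoming edge of $\mathcal{M}'$, and since every edge $(j,i)$ points from a later-arriving vertex $j$ to an earlier-arriving one $i$ (so arrival indices strictly decrease along any directed walk), $\mathcal{M}'$ contains no directed cycle and therefore decomposes into vertex-disjoint directed paths. Write a generic such path as $v_1 \to v_2 \to \cdots \to v_m$, with edges $(v_t, v_{t+1}) \in \mathcal{M}'$ and arrival indices $v_1 > v_2 > \cdots > v_m$. The out-neighbor $v_{t+1}$ of any vertex $v_t$ has a strictly smaller index, so Algorithm~\ref{alg:edge-matching}, which processes vertices in increasing arrival order, colors $v_{t+1}$ before $v_t$. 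Consequently the minimal-index vertex $v_m$ (which has an incoming but no outgoing $\mathcal{M}'$-edge) receives a uniform random color, and each earlier-indexed $v_t$ on the path is assigned the opposite of its already-colored out-neighbor $v_{t+1}$. Thus colors strictly alternate along the path, and since every vertex's color is a deterministic function of the single uniform bit planted at $v_m$, each colored vertex is green with marginal probability exactly $1/2$.

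The crux is to verify that $\mathcal{M}$ is a genuine matching. The key reformulation is that the algorithm includes the edge $(v_t, v_{t+1})$, namely the incoming edge of $v_{t+1}$, if and only if $\mbox{color}(v_{t+1})$ is green. Now fix any vertex; its only candidate incident edges are its (at most one) incoming and (at most one) outgoing $\mathcal{M}'$-edges. For an internal vertex $v_{t+1}$ these are $(v_t, v_{t+1})$ and $(v_{t+1}, v_{t+2})$, included respectively exactly when $\mbox{color}(v_{t+1})$ and $\mbox{color}(v_{t+2})$ are green. Since the algorithm sets $\mbox{color}(v_{t+1})$ to the opposite of $\mbox{color}(v_{t+2})$, at most one of these two events can occur, so at most one incident edge is selected. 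The path endpoints $v_1$ (no incoming edge) and $v_m$ (no outgoing edge) each have a single candidate edge and are trivially fine. Hence every vertex has at most one incident edge of $\mathcal{M}$, so $\mathcal{M}$ is a matching.

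Finally I would bound the expected weight by linearity of expectation. Each edge $(v_t, v_{t+1}) \in \mathcal{M}'$ is placed in $\mathcal{M}$ exactly when its terminating endpoint $v_{t+1}$ is colored green, an event of probability $1/2$ by the marginal uniformity established above. Summing $w_{(v_t,v_{t+1})} \cdot \frac{1}{2}$ over all edges of every path gives $\mbox{E}[\mbox{weight}(\mathcal{M})] = \frac{1}{2} \sum_{e \in \mathcal{M}'} w_e = w/2$.

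I expect the only real obstacle to be the bookkeeping in the matching-property argument: one must keep straight that an edge's inclusion is governed by the color of its earlier (terminating) endpoint, and that the algorithm's opposite-color rule is applied between a vertex and its out-neighbor. Once the direction conventions are pinned down, the alternation immediately forbids two consecutive edges of a path from being selected together, and the expectation computation is routine. A minor point worth stating explicitly is that distinct paths use independent random bits; but since the weight bound uses only single-edge marginals, no independence across paths is actually required.
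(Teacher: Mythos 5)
Your proof is correct and takes essentially the same approach as the paper's: an edge is included exactly when its terminating endpoint is colored green, the opposite-color rule forbids two selected edges sharing a vertex, and each colored vertex is green with probability $\frac{1}{2}$. Your explicit decomposition of the semi-matching into directed paths, with colors alternating deterministically from the single uniform bit at each path's minimal-index vertex, simply spells out the induction the paper dismisses as ``easily shown,'' so it is a more detailed rendering of the same argument rather than a different one.
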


\begin{proof}
We need to show that
\begin{enumerate}
\item the constructed set $\mathcal{M}$ is a matching, and
\item weight of $\mathcal{M}$ is $1/2$ of the weight of $\mathcal{M'}$.
\end{enumerate}
Notice that we include edge $(j, i)$ in $\mathcal{M}$ iff color($i$) is green. For (1), assume contrary. Then there exist vertices $i$, $j$ and $k$ such that both the edges $(i, k)$ and $(j, i)$ are included in $\mathcal{M}$. Since we process vertices in order and vertex $k$ must have been processed first. Since edge $(i, k)$ was included in $\mathcal{M}$, color($k$) must be green. This implies color($i$) is red, which contradicts that edge $(j, i)$ was also included in $\mathcal{M}$.

For (2), it can be easily shown by induction that for any vertex $i$, such that there exists some edge $(j, i)$ in $\mathcal{M'}$, color($i$) = green with probability $1/2$. This means the edge $(j, i)$ gets included in $\mathcal{M}$ with probability $1/2$, which proves that the weight of $\mathcal{M}$ is $1/2$ of the weight of $\mathcal{M'}$.
\end{proof}

\begin{proof}[Proof of Theorem~\ref{thm:edge-weighted}]
Combining Lemma~\ref{lem:edge-semi-matching} and Lemma~\ref{lem:edge-matching-weight}, we immediately see that given an instance $\mathcal{I} = (\mathcal{G}, d)$ of edge-weighted $\mbox{OWNBM}$, we can construct a matching $\mathcal{M}$ in $\mathcal{I}$ of weight $\frac{1}{4} \mbox{OPT}_\mathcal{I}$. Since the semi-matching constructed in Lemma~\ref{lem:edge-semi-matching} picks an edge $(j, i)$ no later than $d$ time steps after the arrival of $i$, and since Algorithm~\ref{alg:edge-matching} processes edges in increasing order of the terminal vertex, this yields a $(\frac{1}{4}, d)$-competitive algorithm for edge-weighted $\mbox{OWNBM}$.
\end{proof}

\section{Vertex-Weighted $\mbox{OWNBM}$}
\label{sec:vertex-weighted}
In this section, we prove Theorem~\ref{thm:vertex-weighted}. 

\begin{algorithm}[t]
	\SetAlgoNoLine
	\KwIn{An instance $\mathcal{I} = (\mathcal{G}, d)$ of vertex-weighted $\mbox{OWNBM}$}
	\KwOut{A semi-matching $\mathcal{M'}$ in $\mathcal{I}$}
	let $n$ be the number of vertices in $\mathcal{G}$\;
	define function $g(Y) = e^{Y - 1}$\;
	$\mathcal{M'} \leftarrow \emptyset$\;
	assume all vertices in $\mathcal{G}$ are colored white\;
	pick $type$ as $origin$ or $destination$ with probability $1/2$ each\;
	\eIf{$type == destination$}
	{
		\For{vertex $j$ in increasing order}
		{
			pick $Y_j \in [0, 1]$ uniformly at random\;
			let $N(j)$ be the set of vertices $k$ of color white such that $(j, k)$ is an edge of $\mathcal{G}$\;
			\If{$N{j} \neq \emptyset$}
			{
				let $i = \argmax\{w_k (1 - g(Y_k)): k \in N(j)\}$\;
				$\mathcal{M'} \leftarrow \mathcal{M'} \cup \{(j, i)\}$\;
				color vertex $i$ black\;
			}
		}
	}
	{
		add $d$ dummy vertices of weight $0$ with no edges last in the ordering\;
		\For{vertex $t$ in increasing order}
		{
			pick $Y_t \in [0, 1]$ uniformly at random\;
			let $i = t-d$\;
			\If{$i > 0$}
			{
				let $N(i)$ be the set of vertices $k$ of color white such that $(k, i)$ is an edge of $\mathcal{G}$\;
				\If{$N{i} \neq \emptyset$}
				{
					let $j = \argmax\{w_k (1 - g(Y_k)): k \in N(i)\}$\;
					$\mathcal{M'} \leftarrow \mathcal{M'} \cup \{(j, i)\}$\;
					color vertex $j$ black\;
				}
			}
		}
	}
	return $\mathcal{M'}$;
	\caption{Constructing semi-matching for vertex-weighted $\mbox{OWNBM}$}
	\label{alg:vertex-semi-matching}
\end{algorithm}

\begin{lemma}
\label{lem:vertex-semi-matching-online}
Let $\mathcal{I} = (\mathcal{G}, d)$ be an instance of vertex-weighted $\mbox{OWNBM}$. Then Algorithm~\ref{alg:vertex-semi-matching} constructs a (random) semi-matching $\mathcal{M'}$ in $\mathcal{I}$ online such that any edge $(j, i) \in \mathcal{M'}$ terminating at vertex $i$ is picked no later than $d$ time steps after its arrival. 
\end{lemma}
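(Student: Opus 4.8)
\textbf{Proof plan for Lemma~\ref{lem:vertex-semi-matching-online}.}

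The plan is to verify the two required claims separately: first that the set $\mathcal{M'}$ returned by Algorithm~\ref{alg:vertex-semi-matching} is indeed a semi-matching, and second that each edge $(j,i) \in \mathcal{M'}$ terminating at $i$ is picked no later than $d$ time steps after the arrival of $i$. Since the algorithm branches on the random choice of $type$, I would handle each branch in turn, noting that the two branches are symmetric in spirit but differ in which endpoint of an edge gets ``frozen'' by the online constraint.

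For the semi-matching property, recall that a semi-matching requires at most one edge of $\mathcal{M'}$ to originate from any vertex \emph{and} at most one edge to terminate at any vertex. I would argue this using the black/white coloring as a bookkeeping device. In the $destination$ branch, an edge $(j,i)$ is added only when $i \in N(j)$, i.e.\ $i$ is still white; immediately afterwards $i$ is colored black, so $i$ can never again be selected as a destination. This guarantees at most one edge of $\mathcal{M'}$ terminates at any vertex. Meanwhile, the outer loop iterates over each origin $j$ exactly once and adds at most one edge originating from $j$, so at most one edge originates from any vertex. The $origin$ branch is the mirror image: here the coloring is applied to the chosen origin $j$, ensuring at most one edge per origin, while the loop structure (one pass per destination $i$) ensures at most one edge per destination. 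In both branches the two conditions hold, so $\mathcal{M'}$ is a semi-matching.

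For the timing property, the key observation is how the incidence window $d$ interacts with the two branches. In the $destination$ branch, when origin $j$ arrives, each edge $(j,k)$ goes to a preceding vertex $k$, and we pick the terminal $i$ immediately at $j$'s arrival; since $i$ precedes $j$, the edge is committed at worst $d$ steps after $i$ arrived (as $j$ is at most $d$ steps after $i$ by the incidence window), which meets the deadline. In the $origin$ branch the algorithm deliberately delays: for vertex $t$ it processes the destination $i = t-d$, meaning it waits exactly $d$ steps past $i$'s arrival before committing an edge terminating at $i$; the $d$ dummy vertices appended at the end ensure every real destination gets its turn. I would verify that this $d$-step delay is exactly the allowed matching window, so the commitment happens precisely at the deadline and not later.

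The step I expect to require the most care is the $origin$ branch's timing argument, specifically confirming that deferring the decision for destination $i$ until time $i+d$ is both necessary (so that all potential origins $j$ with $(j,i) \in E$ have arrived, since such $j$ lie within the next $d$ steps after $i$) and sufficient (so that the deadline of $d$ steps after $i$'s arrival is not exceeded). The dummy-vertex padding and the off-by-one indexing $i = t-d$ must be checked to ensure no real destination is skipped and none is processed too late; this is the main obstacle, whereas the semi-matching property follows routinely from the coloring invariant.
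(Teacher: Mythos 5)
Your proposal is correct, and it is simply a careful elaboration of the paper's one-line proof (``Obvious from the algorithm.''): the coloring invariant gives the semi-matching property, and the two branches' timing (immediate commitment at $j$'s arrival in the $destination$ branch, deferral to time $i+d$ with dummy-vertex padding in the $origin$ branch) gives the $d$-step deadline, exactly as you argue. There is no divergence in approach, since direct verification from the algorithm is the only argument available here.
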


\begin{proof}
Obvious from the algorithm.
\end{proof}

We could lower bound the weight of the semi-matchings constructed by Algorithm~\ref{alg:vertex-semi-matching}. But for our proof, we need a stronger claim for which we define the notion of {\em half-weight}.

\begin{definition}
\label{def:half-weight}
A semi-matching $\mathcal{M'}$ in an instance $\mathcal{I}$ of vertex-weighted $\mbox{OWNBM}$ constructed by Algorithm~\ref{alg:vertex-semi-matching} is said to be of {\em origin-type} if the random variable $type$ took the value $origin$; else it is said to be of {\em destination-type}. The half-weight of an origin-type semi-matching is defined as 
\[
	\sum_{(j, i) \in \mathcal{M'}} w_j;
\] 
while the half-weight of a destination-type semi-matching is defined as
\[
	\sum_{(j, i) \in \mathcal{M'}} w_i.		
\]
\end{definition}

We use a beautiful primal-dual proof technique introduced by Devanur, Jain and Kleinberg~\cite{DJK13} for lower bounding the expected half-weight of a semi-matching constructed by the algorithm. However, the proof is omitted in this shorter exposition.

\begin{lemma}
\label{lem:vertex-semi-matching-half-weight}
Let $\mathcal{I} = (\mathcal{G}, d)$ be an instance of vertex-weighted $\mbox{OWNBM}$ and $\mbox{OPT}_\mathcal{I}$ be the weight of an optimal matching. Then Algorithm~\ref{alg:vertex-semi-matching} constructs a (random) semi-matching $\mathcal{M'}$ of expected half-weight $\frac{1}{2} (1 - \frac{1}{e}) \mbox{OPT}_\mathcal{I}$.
\end{lemma}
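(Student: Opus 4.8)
The plan is to observe that, once the random variable $type$ is fixed, Algorithm~\ref{alg:vertex-semi-matching} is \emph{exactly} the Devanur--Jain--Kleinberg (DJK) algorithm for vertex-weighted online bipartite matching run on an auxiliary bipartite graph, and that the half-weight is precisely the weight of the offline vertices matched by DJK. Since $type$ is $destination$ or $origin$ with probability $1/2$ each, I would condition on the two cases, prove a $(1-\frac{1}{e})$ guarantee against the appropriate half of $\mbox{OPT}_\mathcal{I}$ in each, and average. Concretely, writing $\mbox{OPT}_\mathcal{I} = \sum_{(j,i) \in \mathcal{O}} (w_i + w_j)$ over the edges of a fixed optimal matching $\mathcal{O}$ (with $i$ the destination and $j$ the origin), the two conditional bounds I aim for are expected destination half-weight $\geq (1-\frac{1}{e}) \sum_{(j,i)} w_i$ and expected origin half-weight $\geq (1-\frac{1}{e}) \sum_{(j,i)} w_j$.

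For the destination case I would build the bipartite graph $H$ whose offline side has a copy $u_v$ of each vertex $v$ (with weight $w_v$ and seed $Y_v$) and whose online side has a copy $a_v$ arriving in increasing order, with an edge $a_j u_k$ exactly when $(j,k) \in E$, so $a_j$ neighbors precisely the in-window predecessors $k$ of $j$. Reading the $destination$ branch, when vertex $j$ is processed it picks $Y_j$ (the seed of its future offline role) and then, viewed as the online arrival $a_j$, matches to the still-white predecessor $i$ maximizing $w_i(1-g(Y_i))$, coloring $i$ black; this is verbatim the DJK greedy rule, and ``white'' is exactly ``unmatched on the offline side.'' The two roles of a vertex never collide because a semi-matching permits one incident edge of each orientation, so this is a faithful DJK instance and the destination half-weight equals the weight of matched offline vertices. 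Finally, the edges of $\mathcal{O}$ give a feasible offline matching $\{a_j u_i : (j,i) \in \mathcal{O}\}$ in $H$: the $a_j$ are distinct and the $u_i$ are distinct because $\mathcal{O}$ is a matching, so each vertex is used on at most one side. Its offline weight is $\sum_{(j,i)} w_i$, whence $\mbox{OPT}(H) \geq \sum_{(j,i)} w_i$, and the DJK theorem yields the desired bound. The $origin$ case is the mirror image: the delay $i = t-d$ together with the $d$ dummy vertices guarantees that when destination $i$ is processed all of its in-window successors have arrived, so $i$ plays the online arrival choosing among offline \emph{origins} $j$, and the same argument gives expected origin half-weight $\geq (1-\frac{1}{e})\sum_{(j,i)} w_j$.

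Averaging the two conditional bounds with weight $1/2$ each yields $\frac{1}{2}(1-\frac{1}{e})(\sum w_i + \sum w_j) = \frac{1}{2}(1-\frac{1}{e}) \mbox{OPT}_\mathcal{I}$, as claimed. The routine parts are the feasibility of the induced offline matching and the averaging; the main obstacle is the engine borrowed from DJK, namely the randomized primal-dual charging that certifies the $(1-\frac{1}{e})$ guarantee. To make the reduction airtight I must check that the DJK analysis is unaffected by each offline seed $Y_v$ being revealed only when $v$ arrives rather than up front---this is fine because $a_j$ consults $u_k$ only for $k<j$, so every relevant seed is already fixed, i.i.d.\ uniform, and independent when it is used. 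If instead one wants a self-contained argument, I would reproduce the DJK potential: assign dual $\alpha_{u} = w_{u}\, g(Y_{u})$ to each matched offline vertex, charge the online side the residual $w_u(1-g(Y_u))$, and show that in expectation over the seeds every edge's dual constraint is met up to the factor $(1-\frac{1}{e})$; verifying this expected dual feasibility edge by edge, using the gain function $g(Y)=e^{Y-1}$, is where essentially all the work lies.
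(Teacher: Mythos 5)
Your proposal is correct and follows exactly the route the paper indicates---conditioning on the $type$ coin, viewing each branch as the Devanur--Jain--Kleinberg perturbed-greedy algorithm on an auxiliary bipartite graph whose offline side carries the half-weight (destinations in one branch, origins in the other, with the $d$-step delay and dummy vertices ensuring all offline candidates are present), and averaging the two conditional $(1-\frac{1}{e})$ bounds against the corresponding halves of $\mbox{OPT}_\mathcal{I}$. The paper omits this proof entirely, saying only that it uses the DJK randomized primal-dual technique, and your write-up (including the checks that each seed $Y_v$ is fixed before it is consulted, that the two roles of a vertex live on different sides of the bipartite lift, and that a fixed optimal matching induces a feasible offline-side matching of weight $\sum w_i$ resp.\ $\sum w_j$ in each auxiliary graph) supplies precisely the details the paper leaves out.
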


To complete the proof of Theorem~\ref{thm:vertex-weighted}, we need to show how can we go from a semi-matching of half-weight $w$ to a $3$-matching of weight at least $w$ online. The proof of the following lemma is omitted.

\begin{lemma}
\label{lem:vertex-matching-weight}
Let $\mathcal{I} = (\mathcal{G}, d)$ be an instance of vertex-weighted $\mbox{OWNBM}$ and $\mathcal{M'}$ be a semi-matching in $\mathcal{I}$ of half-weight $w$. Then we can construct a $3$-matching $\mathcal{M}$ in $\mathcal{I}$ of weight at least $w$ by processing vertices in order. An edge $(j, i)$ is picked in a $2$-set or a $3$-set while processing vertex $i$, and may get deleted from a $3$-set while processing vetex $j$.
\end{lemma}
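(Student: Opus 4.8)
The plan is to first exploit the combinatorial structure of a semi-matching. Since in a semi-matching every vertex has at most one outgoing and at most one incoming edge, and since every edge points from a later vertex to an earlier one in the arrival order, $\mathcal{M'}$ has no directed cycles and decomposes into vertex-disjoint directed paths $v_1 \leftarrow v_2 \leftarrow \cdots \leftarrow v_k$ (with $v_1 < v_2 < \cdots < v_k$ in arrival order, each edge $(v_{m+1}, v_m)$ originating at $v_{m+1}$ and terminating at $v_m$). On such a path the origin-type half-weight sums the weights of $v_2, \dots, v_k$ (every vertex except the sink $v_1$), while the destination-type half-weight sums the weights of $v_1, \dots, v_{k-1}$ (every vertex except the source $v_k$). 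In either case the half-weight is the total weight of the path minus the weight of a single endpoint, so it suffices to produce, for each path, a family of $2$-sets and $3$-sets covering \emph{every} vertex of the path; its weight is then the full path weight, which dominates either half-weight. Summing over the disjoint paths (isolated vertices contribute nothing to the half-weight and are ignored) gives weight at least $w$, and notably a single construction serves both types.

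Next I would give an online procedure that covers every vertex of every path while respecting the windowed and deletion constraints. Process the vertices in arrival order, handling vertex $i$ at time $i+d$; at that moment its incoming edge (if any) and the placement of all earlier vertices are already fixed. When $i$ is processed it is in exactly one of three states, dictated by how its outgoing edge was placed: it is \emph{fresh} (only for the sink $v_1$, which has no outgoing edge), it is the second element of an open $2$-set, or it is the third element of a $3$-set. If $i$ still has an incoming edge $(j,i)$ — equivalently, $i$ is not the source, so the path continues — the procedure acts as follows: if $i$ is fresh it opens the $2$-set $\{i,j\}$; if $i$ is the second element of a $2$-set it speculatively grows that set into the $3$-set obtained by adjoining $j$; and if $i$ is the third element of a $3$-set it deletes its own outgoing edge to release itself, reverting that $3$-set to a $2$-set, and then opens the $2$-set $\{i,j\}$. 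If $i$ has no incoming edge then $i$ is the source $v_k$; it was already covered when its predecessor was processed, so nothing further is done.

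Then I would verify the invariants that make this correct. A short induction on the processing order shows that every vertex other than the sink $v_1$ has had its outgoing edge placed before it is processed, so it is never fresh and is always either the second element of a $2$-set or the third element of a $3$-set; hence the three cases are exhaustive, every set has size at most $3$, every $2$-set is a genuine edge, and every $3$-set is a run of three consecutive path vertices carrying its two edges and so is a legal $3$-set. The same induction shows that processing a vertex with an incoming edge always inserts its partner into a set; applied to $v_{k-1}$ this covers the source $v_k$, and every other vertex is visibly covered, so the cover is complete. For timing, an incoming edge $(j,i)$ is added to a set only while processing $i$, at time $i+d$, which respects the matching window; an outgoing edge $(j,i)$ is deleted only while processing $j$, at time $j+d$, which respects the permitted deletion deadline, and every such deletion removes an edge from a $3$-set. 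The resulting sets are vertex-disjoint within and across paths, so they form a valid $3$-matching of weight equal to the total weight of the matched paths, which is at least $w$.

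The step I expect to be the main obstacle is the online bookkeeping in the last two paragraphs: because the source of a path is revealed only by the \emph{absence} of an incoming edge when it is finally processed, the procedure cannot know in advance where a path ends, and must optimistically absorb each candidate tail vertex into a $3$-set and then undo that choice with a correctly-timed deletion once the path is found to continue. Making this speculation-and-repair loop precise — proving the state invariant that each vertex is in one of the three expected states, that deletions always act on $3$-sets, and that every addition and deletion falls inside its window — is the technical heart of the argument; once it is established, the weight bound is immediate from the completeness of the cover.
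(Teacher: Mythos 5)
Your proof is correct, and it follows exactly the mechanism the paper prescribes in the lemma statement itself (pick edge $(j,i)$ into a $2$-set or $3$-set while processing $i$ at time $i+d$, speculatively grow $2$-sets into $3$-sets, and delete an edge from a $3$-set while processing its origin $j$), noting that the paper omits its own proof of this lemma. Your write-up supplies the missing details soundly: the path decomposition of the semi-matching, the observation that covering \emph{all} path vertices dominates both the origin-type and destination-type half-weight simultaneously, and the state invariant ensuring the deletions always act on $3$-sets within the allowed window.
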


\begin{proof}[Proof of Theorem~\ref{thm:vertex-weighted}]
Combining Lemma~\ref{lem:vertex-semi-matching-half-weight} and Lemma~\ref{lem:edge-matching-weight}, we immediately see that given an instance $\mathcal{I} = (\mathcal{G}, d)$ of vertex-weighted $\mbox{OWNBM}$, we can construct a $3$-matching $\mathcal{M}$ in $\mathcal{I}$ of weight $\frac{1}{2} (1 - \frac{1}{e}) \mbox{OPT}_\mathcal{I}$. Since the semi-matching constructed in Algorithm~\ref{alg:vertex-semi-matching} picks an edge $(j, i)$ no later than $d$ time steps after the arrival of $i$, and since Lemma~\ref{lem:vertex-matching-weight} processes edges in increasing order of the terminal vertex, this yields a randomized $(\frac{1}{2} (1 - \frac{1}{e}), d)$-competitive algorithm for vertex-weighted $\mbox{OWNBM}$ where an edge $(j, i)$ picked in a $3$-set may get deleted no later than $d$ steps since the arrival of vertex $j$.
\end{proof}

\section{Conclusions}
\label{sec:conclude}
In this work, we initiated the study of matching rides in a ride-sharing platform in a formal graph-theoretic online matching setting. For this purpose, we proposed a model that we call Online Windowed Non-Bipartite Matching ($\mbox{OWNBM}$). Our model is simple and elegant, and yet remarkably powerful to capture the important attributes of the rider matching problem in ride-sharing platforms. We showed how the edge-weighted and vertex-weighted versions of $\mbox{OWNBM}$ capture the objectives of matching efficiency and matching rate, which are important and widely used metrics in ride-sharing industry. 

We provided a randomized $\frac{1}{4}$-competitive algorithm for the edge-weighted version, and a randomized $\frac{1}{2} (1 - \frac{1}{e})$-competitive algorithm for the vertex-weighted version of $\mbox{OWNBM}$ (with some constraints relaxed). While we believe our bounds might be further improved, these are first results for online matching in non-bipartite graphs in the adversarial setting to the best of our knowledge. Our result for the edge-weighted version uses a beautiful result from combinatorial auction theory, while that for the vertex-weighted version uses insights and techniques from a very elegant randomized primal-dual analysis technique in matching theory.

We believe our model is of independent theoretical interest and can find use outside of the immediate context of ride-sharing. In fact, one of the novelties of our work is proposing a model for studying online matching in non-bipartite graphs. We leave a range of open questions and future research directions. We believe it is possible to improve the bounds we presented. Our model can also be studied in the stochastic setting where edges are drawn from a distribution which is known or can be learned. We considered two-party matches in this work. One can consider hyper-edges in matching which correspond to matching more than two parties together for sharing a car. We believe that these and other related challenging problems are both theoretically interesting and practically relevant.

\bibliographystyle{plainurl}
\bibliography{online_matching}

\end{document}